\documentclass[12pt]{article}
\usepackage{graphicx,amsmath,amssymb,url,enumerate,mathrsfs,epsfig,color}
\usepackage{tikz}
\usepackage{float}

\usepackage{amsmath}
\usepackage{amssymb}
\usepackage{yfonts}

\usepackage{tikz,pgfplots}

\usetikzlibrary{calc, patterns,arrows, shapes.geometric}

\usepackage{graphicx,amsmath,amssymb,url,enumerate,mathrsfs,epsfig,color}

\usetikzlibrary{decorations.text}
\usetikzlibrary{decorations.markings}

\pgfplotsset{compat=1.8}

\usepackage{enumerate}
\usepackage{boxedminipage}
\usepackage{makeidx}
\usepackage{multicol}
\usepackage{xcolor}
\usepackage{mathtools}

\usetikzlibrary{calc, patterns,arrows, shapes.geometric}

\newenvironment{proof}{\noindent{\bf Proof~:}}{\QED\medskip}
\def\QED{\hskip0.1em\hfill\null\ \null\nobreak\hfill
\kern3pt\lower1.8pt\vbox{\hrule\hbox
{\vrule\kern1pt\vbox{\kern1.7pt \hbox{$\scriptstyle
QED$}\kern0.2pt}\kern1pt\vrule}\hrule}}

\newcommand{\hor}{\mathrlap{\perp}\square}
\newcommand{\ram}{\raisebox{0.22ex}{$-$}}
\newcommand{\ver}{{\mathrlap{\ram}}\square}


\newtheorem{prop}{Proposition}[section]

\newtheorem{example}{Example}[section]
\title{Double groupoids in the theory of material uniformity}
\author{Marcelo Epstein \\ \small{University of Calgary, Canada} \\ \small{mepstein@ucalgary.ca}}
\begin{document}
\maketitle

\begin{abstract}
The use of double groupoids and their associated double Lie algebroids and characteristic distributions is proposed for the description and analysis of continuous media that carry two different constitutive or geometric structures. Various measures of misalignment and lack of uniformity that may arise when two individually perfectly uniform structures are combined are suggested, and the concept of coupled symmetries is proposed and illustrated.
\end{abstract}

\section{Introduction}

In continuum physics, the theory of material uniformity addresses the question of comparing the properties of each pair of points in a material substrate, assumed to be continuous. Uniformity is declared when the result of this investigation is that, according to a preestablished criterion, those properties are identical for every pair. Even if the substrate turns out to be uniform, however, the question  can be raised as to whether there exists a configuration in which all points are simultaneously in the same state. A positive answer to this question gives rise to the notion of homogeneity. Lack of homogeneity of a uniform body can be related to various physical theories of continuous distributions of dislocations and other defects.

In the general description provided above, the subtlety of a precise specification of what the material properties being compared may be has been deliberatley avoided. It is possible, therefore, to contemplate the coexistence of different kinds of properties in the same material substrate. An obvious example in this direction is the theory of material composites. In a binary composite, for example, two solid media are blended to form a mixture. Each of these components may be perfectly uniform in its own right, but the result will not be so in general. To visualize this possibility, one may think of two thin wooden plates that have been glued together after having given to one of them an in-plane deformation (induced perhaps by an arbitrary temperature field before the gluing takes place).

The composite, however, is not the only possibility where two uniformity structures coexist. One may think of an ordinary body in which, under microscopic observation, a geometrical lattice can be distinguished. This lattice may have little or no bearing on the mechanical properties of the substrate, but it may be decisive in the determination of other properties, such as optical or magnetic. Such a lattice, in the continuous limit, provides us with a frame field embedded in the substrate, characterized mathematically as a distant parallelism in the body. The mechanical properties themselves are also known to determine such a parallelism (or a family thereof), in a manner advocated most forcefully by Noll \cite{noll}. In this way, we obtain an interplay between two differential geometric structures, of different physical origins, cohabiting within the same medium. Is the body simultaneoously uniform with respect to both structures?

For obvious reasons, as even this cursory introduction has suggested, the language in which the various theories of uniformity and homogeneity have been traditionally framed is that of differential geometry. Increasingly sophisticated tools of this discipline have been invoked to this end. The line of thought stemming from \cite{noll} and further developed in a similar vein and described in works such as \cite{wang, bloom, ees, eelz}, has been successfully unified under the umbrella of the theory of groupoids \cite{mms, book}. In the case of the coexistence of two different strucures, the individuality of each of which wants to be preserved in the mix, a more refined tool appears to be required. In a recent article \cite{symmetry}, it has been argued that this tool is to be found in the theory of double groupoids. The purpose of the present work is to further this argument and to develop new results along the same lines of enquiry.

To furnish a common starting point, in Section \ref{sec:measures}, the uniformity of a binary composite is tackled from a more standard viewpoint. For components with various kinds of symmetries, a measure of non-uniformity of the composite is suggested. These preliminary and rather intuitive results are later, in Section \ref{sec:algebroid}, placed within the setting of the double Lie algebroid associated with the composite. Section \ref{sec:layered} discusses the possible non-uniform structures that may arise in the case of a non-uniform composite made of two uniform components. These include layered and filamented structures. Section \ref{sec:groupoid} contains a general introduction to the concepts of groupoids and double groupoids and their applications to binary structures. Finally, Section \ref{sec:algebroid} broaches the infinitesimal version of a double groupoid, namely the associated double Lie algebroid and the associated characteristic distribution. These notions are discussed at a somewhat intuitive level that attempts to avoid some of the traps of excessive formalism.

\section{Measures of non-uniformity} \label{sec:measures}

\subsection{Review of some basic notions}

For the present purposes, a {\it binary composite} is defined as a non-reactive mixture of two elastic solids, or a combination of two parallelisms, as described in the introduction. In this section we will consider the case in which both components are individually smoothly materially uniform, and we will determine necessary and sufficient conditions for the mixture to be uniform. 

Assuming smoothness, material uniformity can be interpreted in terms of a family of smooth frame fields ${\bf P}(X)$, where $X$ is a variable point in the body manifold $\mathcal B$. Each such frame field determines a linear connection $\Gamma_P$. Restricting the frame field to a sufficiently small open neighbourhood ${\mathcal U} \subset {\mathcal B}$, the connection $\Gamma_P$ is equivalent to a distant parallelism, that is, a connection with vansihing curvature in $\mathcal U$, but with possibly non-vanishing torsion. When the symmetry group of the material is discrete, it can be shown that the material connection is unique.

A frame ${\bf P}(X)$ at a point can be regarded as an invertible linear map ${\bf P}(X): {\mathbb R}^3 \to T_X{\mathcal B}$ to the tangent space $T_X{\mathcal B}$, as schematically depicted in Figure \ref{fig:implants}. In a uniform elastic body $\mathcal B$, this map can be regarded as an {\it implant} of an {\it archetype}, which is usually thought of as a material point in a standard reference state (such as a natural configuration). The composition ${\bf P}_{XY}={\bf P}(Y) \; {\bf P}^{-1}(X)$ is an invertible linear map from $T_X{\mathcal B}$ to $T_Y{\mathcal B}$ known as a {\it material isomorphism from $X$ to $Y$}. Its physical meaning is a transplant that achieves a perfecgt graft, so that the material response of the grafted material element after being deformed according to ${\bf P}_{XY}$ is mechanically indistinguishable from the original element at $Y$. Clearly, this objective can be achieved only if both points are made of the same material.

 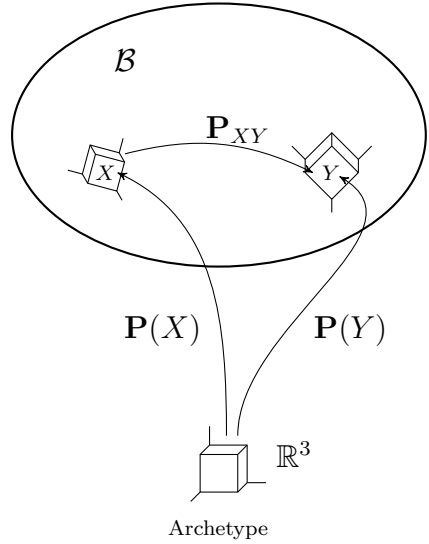
\begin{figure}[H]
\begin{center}

\begin{tikzpicture}[scale = 0.5]
\begin{scope}[shift={(0,-3)}, scale=0.5]
\draw[black] (1,1) -- (1,-1) -- (-1,-1) -- (-1,1) -- (1,1); 
\draw[black] (-1,1) -- (-0.5,1.5) -- (1.5,1.5) -- (1.5,-0.5) --
(1,-1); \draw[black] (1,1) -- (1.5,1.5);
\draw (1.5,-0.5)--(2.5,-0.5);
\draw (-1,-1)--(-1.5,-1.5);
\draw (-0.5,1.5)--(-0.5,2.5);
\end{scope}
\begin{scope}[shift={(3,5)},rotate=45, scale=0.5]
\draw[black] (1,1) -- (1,-1) -- (-1,-1) -- (-1,1) -- (1,1); 
\draw[black] (-1,1) -- (-0.5,1.5) -- (1.5,1.5) -- (1.5,-0.5) --
(1,-1); \draw[black] (1,1) -- (1.5,1.5);
\draw (1.5,-0.5)--(2.5,-0.5);
\draw (-1,-1)--(-1.5,-1.5);
\draw (-0.5,1.5)--(-0.5,2.5);
\end{scope}
\begin{scope}[shift={(-3,5)},rotate=75,scale=0.4]
\draw[black] (1,1) -- (1,-1) -- (-1,-1) -- (-1,1) -- (1,1); 
\draw[black] (-1,1) -- (-0.5,1.5) -- (1.5,1.5) -- (1.5,-0.5) --
(1,-1); \draw[black] (1,1) -- (1.5,1.5);
\draw (1.5,-0.5)--(2.5,-0.5);
\draw (-1,-1)--(-1.5,-1.5);
\draw (-0.5,1.5)--(-0.5,2.5);
\end{scope}

\draw[thick] (0,6) ellipse (5.5cm and 3.5cm); \draw[-stealth'] (0.5,-2) to [out=90,in=-30]
(3.2,4.9); \draw[-stealth'] (0.2,-2) to [out=90,in=-25]
(-2.7,5.0);\node at(3.5,0.8) {${\bf P}({Y})$}; \node
at(0,-4.5) {$_{\rm Archetype}$};
\node at (-1.5,0.8) {${\bf P}(X)$};
\node at (3,5) {$_Y$};
\node at (-3,5) {$_X$};
\draw[-stealth'] (-2.5,5.5) to [bend left=20] (2.5,5);
\node[above] at (0.5,5.5) {${{\bf P}_{XY}}$};
\node at (2,-2.5) {${\mathbb R}^3$};
\node at (-2.5,8) {$\mathcal B$};

\end{tikzpicture}
\end{center}
\caption{Implants and material isomorphism}
\label{fig:implants}
\end{figure}

The Christoffel symbols of the connection $\Gamma_P$ associated with the smooth frame in a neighbourhood ${\mathcal U } \subset {\mathcal B}$ are obtained as
\begin{equation} 
\Gamma^I_{JK}= P^I_\alpha\;P^{\alpha}_{J,K}= -P^I_{\alpha,K}\;P^{-\alpha}_J.
\end{equation}
In these expressions we have used Greek indices to denote components in the archetype (that is, in the standard basis of ${\mathbb R}^3$) and capital indices to refer to components in a coordinate system (usually Cartesian) in the reference configuration of the body. Moreover, commas denote partial differentiation with respect to the coordinates $X^I$, and a minus sign preceding a superscript indicates components of the inverse map. The summation convention is used throughout. The torsion of this connection is the skew-symmetric tensor $\tau$ with components
\begin{equation}
\tau^I_{JK}=\Gamma^I_{JK}-\Gamma^I_{KJ}.
\end{equation}

For solid materials, by definition, the symmetry group in any local reference configuration must be a conjugate of a sugbroup of the orthogonal group in ${\mathbb R}^3$. It follows, therefore, that, quite apart from the material connections just described, there is also a metric connection associated with the Riemannian metric induced by one (and, therefore, all) of the frame fields. This unique metric connection is, clearly, torsion-free, but may have a non-vanishing Riemannian curvature. The metric tensor $\bf g$ of this Riemannian metric is defined as ${\bf g}=({\bf PP}^T)^{-1}$. To avoid any misunderstanding of the meaning of the transposed ${\bf P}^T$, we can express the metric tensor $\bf g$ in components as
\begin{equation}
g_{IJ}= \delta_{\alpha\beta}\;P^{-\alpha}_I \;P^{-\beta}_J.
\end{equation}

\subsection{Suggested measures}

For a binary composite, the brief conceptual framework just introduced can be applied independently to each of the two components, which we will indicate with  subscripts 1 and 2. We will next suggest conditions for the composite to be materially uniform when the individual frame fields ${\bf P}_1$ and ${\bf P}_2$ (at the moment of effecting the mixture, say) are known.

\begin{enumerate}
\item\label{case1} {\it Two components with discrete symmetry groups}: The composite is uniform if, and only if, the (unique) material connections coincide. Thus, we can define the measure of local non-uniformity of the composite by means of the tensor
\begin{equation} \label{eq1}
{\bf B}= \Gamma_1 - \Gamma_2.
\end{equation}
 If $P^I_\alpha$ denotes the components of the uniformity frame of the first component, the non-uniformity tensor $\bf B$ can be expressed as
\begin{equation} \label{eq6}
B^I_{JK}=- P^{-\alpha}_{\;J} P^I_{\alpha;K}
\end{equation}
where a semicolon indicates the covariant derivative with respect to the material connection of the second component.
\item\label{case2} {\it A discretely symmetric and a fully isotropic component}:  A necessary and sufficient condition for uniformity of this composite is the coincidence of the material metrics of the components. The measure of local non-uniformity can be defined in this case as
\begin{equation} \label{eq3}
{\bf B}= {\bf g}_1-{\bf g}_2.
\end{equation}
\item {\it A discretely symmetric and a transversely isotropic component}: A frame field of a transversely isotropic solid implies the existence of a vector field ${\bf n}(X)$ in the body. A necessary and sufficient condition for uniformity of the composite is the equality of the material metrics and the parallelism of $\bf n$ with respect to the discrete connection $\Gamma_1$. In other words, the measure of non-uniformity of the composite is given by the two tensors
\begin{equation}
{\bf B}= {\bf g}_1-{\bf g}_2,\;\;\;\;\;\;\;\;\;\;{\hat{\bf B}}=\nabla_1{\bf n},
\end{equation}
where $\nabla_1$ denotes the covariant derivative with respect to the unique material connection $\Gamma_1$.
\item {\it Two fully isotropic components}: A necessary and sufficient condition for uniformity is the coincidence of the Riemannian metrics. The measure of local non-uniformity is
\begin{equation}
{\bf B}= {\bf g}_1-{\bf g}_2,
\end{equation}
just as in Case \ref{case2} above. 
\item {\it Two transversely isotropic components}: A necessary and sufficient condition for uniformity is the coincidence of the Riemannian metrics, and the angle condition
\begin{equation}
\langle {\bf n}_1, {\bf P}_1 {\bf P}_2^{-1}{\bf n}_2\rangle_M=\langle {\bf n}_1, {\bf n}_2 \rangle_M.
\end{equation}
In this expression, $\langle\;,\;\rangle_M$ denotes the inner product relative to the common Riemannian metric.
 
\end{enumerate}

The symmetry group at a point $X$ of the composite is equal to the intersection of the symmetry groups of the components at that point. Thus, the symmetry group of a composite made of two fully isotropic materials can be as small as the trivial group or as large as the orthogonal group.

\subsection{Remarks on time evolution}

A uniform binary composite, just as any other uniform material, may undergo processes of material evolution, such as remodeling, aging, and morphogenesis. What is peculiar to binary composites is that, even if the material is not only uniform but also homogeneous, the partial stresses in each of the components, may not vanish in a homogeneous configuration. This fact may be a further trigger of evolutionary processes. If the composite is not even uniform, the non-uniformity itself may become a driving force behind the evolution. In particular contexts, the lack of uniformity may be interpreted, for example, as some kind of difference of chemical potential that tends to reduce the measure of non-uniformity. In a different context, the composite may be seen as an ordinary material that carries an embedded network of fibers not necessarily contributing significantly to the material properties but associated with other optical or biological phenomena. In such cases, the non-uniformity of the composite may become the driving force for a time evolution of this extra structure within an unchanging material background.

A first-order evolution law for a material frame field expresses the time derivative of the map ${\bf P}(X,t)$ as a function of a list of variables that include $\bf P$ itself and other physically meaningful quantities. An evolution driven exclusively by the non-uniformity of a composite will, therefore, include the non-uniformity tensor and, possibly, its spatial gradients. If one of the components is non-evolving and if, moreover, it is in a homogeneous reference configuration, we recover the results of \cite{eelz, e1} provided that the dependence on $\bf B$ is assumed to be mediated in Cases \ref{case1} and \ref{case2}, respectively, by the torsion of the variable material connection or the curvature of its material metric.

\section{Layered structures}
\label{sec:layered}

The uniformity of the components does not guarantee the uniformity of the composite, as we have demonstrated. It may happen, however, that certain regions of the composite are uniform. In the most general case, as we know from \cite{jgp1}, a non-uniform body can be regarded as the union of materially uniform subsets. In the smooth case, we obtain a material foliation in which each leaf is a body submanifold of dimension 0, 1, 2, or 3. From a physical standpoint, these submanifolds are isolated points, uniform fibers, uniform laminae, or uniform sub-bodies, respectively. In the case of binary composites, this classification can be expressed in terms of the non-uniformity tensors introduced in Section \ref{sec:measures}.

To illustrate these ideas, let us consider a composite of type \ref{case1}, whose non-uniformity tensor is given by Equation (\ref{eq1}). The action ${\bf Bv}$ of the third-order tensor $\bf B$ on a vector $\bf v$ defined in components by $B^I_{JK} v^K$ is, therefore, a measure of the change of the components of $\bf P$ along $\bf v$ as seen from the perspective of the second frame. We can define the null space of $\bf B$ at a point $X \in {\mathcal B}$ as the vector subspace of $T_X{\mathcal B}$ spanned by all vectors $\bf v$ such that ${\bf Bv=0}$.\footnote{For a thorough treatment of eigenvalues and eigenvectors of third-order tensors see \cite{qi}.}

If the dimension $m$ of the null space of $\bf B$ is constant over $\mathcal B$ we obtain a regular distribution. For $m=0$ we recognize a totally non-uniform body, such as a functionally graded material in which  there are no uniform submanifolds except isolated points. The case $m=1$ corresponds to a fibered body, in which each fiber is uniform. From the mathematical point of view, the smoothness of the material parallelisms ${\bf P}_1$ and ${\bf P}_2$ guarantees, via the theorem of existence and uniqueness of solutions of systems of ordinary differential equations, that the fibers are properly defined.

The case $m=2$ gives rise to a regular two-dimensional distribution on $\mathcal B$. According to the theorem of Frobenius, this distribution is completely integrable (that is, it has integral embedded manifolds of dimension $m$) if, and only if, the distribution is involutive. This theorem raises the question as to whether a body with null spaces of $\bf B$ of constant dimension 2 is actually a laminated body, since the corresponding distribution may turn out not to be involutive. The following proposition answers this question.

\begin{prop} A regular distribution generated by the null spaces of $\bf B$ is always involutive.
\end{prop}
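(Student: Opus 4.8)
The plan is to show that the equation $\mathbf B\mathbf v=\mathbf 0$ is really the condition that $\mathbf v$ be tangent to the level sets of a single smooth matrix-valued map attached to the composite, and then to invoke the elementary fact that the tangent distribution to the fibres of a constant-rank map is involutive.

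The first step is to re-read the defining condition of the null space through Equation (\ref{eq6}). Since $B^I_{JK}=-P^{-\alpha}_{\,J}P^I_{\alpha;K}$ and the matrix $P^{-\alpha}_{\,J}$ is invertible at every point, $B^I_{JK}v^K=0$ for all $I,J$ is equivalent to $P^I_{\alpha;K}v^K=0$ for all $I,\alpha$; in words, $\mathbf v$ lies in the null space of $\mathbf B$ at $X$ precisely when the uniformity frame field $\mathbf P_1$ of the first component is parallel along $\mathbf v$ with respect to the material connection $\Gamma_2$ of the second component. The second, and key, step is to note that the frame field $\mathbf P_2$ is itself parallel with respect to its own material connection $\Gamma_2$ — this is immediate from the formula $\Gamma^I_{JK}=-P^I_{\alpha,K}P^{-\alpha}_{\,J}$, which gives $P^I_{\alpha;K}=0$ when $\mathbf P=\mathbf P_2$. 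Hence, working on a neighbourhood $\mathcal U$ in the moving frame $\mathbf P_2$, the covariant derivative $\nabla_2$ reduces to ordinary directional differentiation of components. Introducing the $GL(3,\mathbb R)$-valued map $\mathbf Q=\mathbf P_2^{-1}\mathbf P_1$ on $\mathcal U$ — whose entries $Q^\beta_\alpha$ are the components of the first frame expressed in the second — the condition of the previous step becomes exactly $\mathbf v\bigl(Q^\beta_\alpha\bigr)=0$ for all $\alpha,\beta$. Therefore the null space of $\mathbf B$ at $X$ is nothing but $\ker d\mathbf Q_X$, the common kernel of the exact one-forms $dQ^\beta_\alpha$.

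Involutivity is then automatic. If $\mathbf u$ and $\mathbf w$ are local vector fields everywhere tangent to this distribution, then $\mathbf u\bigl(Q^\beta_\alpha\bigr)$ and $\mathbf w\bigl(Q^\beta_\alpha\bigr)$ vanish identically as functions, so $[\mathbf u,\mathbf w]\bigl(Q^\beta_\alpha\bigr)=\mathbf u\bigl(\mathbf w(Q^\beta_\alpha)\bigr)-\mathbf w\bigl(\mathbf u(Q^\beta_\alpha)\bigr)=0$; equivalently, since each $dQ^\beta_\alpha$ is closed, $dQ^\beta_\alpha([\mathbf u,\mathbf w])=\mathbf u\bigl(dQ^\beta_\alpha(\mathbf w)\bigr)-\mathbf w\bigl(dQ^\beta_\alpha(\mathbf u)\bigr)=0$. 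Thus $[\mathbf u,\mathbf w]$ again lies in the null space of $\mathbf B$, which is the assertion. Here the regularity hypothesis (constancy of the dimension $m$) is used only to know that these null spaces form a smooth subbundle — equivalently, that $\mathbf Q$ has locally constant rank $3-m$, so that by the constant-rank theorem its level sets are honest $m$-dimensional submanifolds of $\mathcal U$ and the distribution is the tangent distribution of the ensuing foliation. I do not foresee a genuine obstacle: the only point demanding a little care is the two-step reduction to the fibres of $\mathbf Q$, after which the result is the standard fact that the fibres of a smooth map form a foliation.
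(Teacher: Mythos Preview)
Your proof is correct and takes a genuinely different route from the paper's. The paper argues by direct index computation: it expands $B^I_{JK}[\mathbf v,\mathbf w]^K$, transfers the derivatives onto $B$ via the product rule, invokes the identity $\Gamma^I_{JK,L}=P^I_\alpha P^{-\alpha}_{\ J,KL}-\Gamma^M_{JK}\Gamma^I_{ML}$ for each of the two connections, lets the symmetric mixed partials cancel against the antisymmetry in $K,L$, and then rewrites the surviving $\Gamma\hat\Gamma$-terms using $\Gamma=\hat\Gamma+B$ so that every remaining piece carries a factor $B^{\cdot}_{\cdot K}v^K$ or $B^{\cdot}_{\cdot K}w^K$ and hence vanishes. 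You instead recast the condition $B\mathbf v=0$, via Equation~(\ref{eq6}) and the $\Gamma_2$-parallelism of $\mathbf P_2$, as $\mathbf v(Q^\beta_\alpha)=0$ for the scalar functions $Q^\beta_\alpha=(\mathbf P_2^{-1}\mathbf P_1)^\beta_\alpha$; the null distribution is then the common kernel of the exact one-forms $dQ^\beta_\alpha$, and involutivity is a one-liner. Your approach buys a transparent geometric reason for the result---the distribution is tangent to the level sets of the matrix-valued function $\mathbf Q$---and avoids the somewhat delicate algebraic identity~(\ref{eq11}); the paper's computation, on the other hand, stays entirely within the Christoffel-symbol language already in play and does not require introducing the auxiliary object $\mathbf Q$.
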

\begin{proof} Let $\bf v$ and $\bf w$ be everywhere linearly independent vector fields that belong everywhere to the null space of $\bf B$, namely
\begin{equation} \label{eq7}
B^I_{JK}v^K=B^I_{JK}w^K=0.
\end{equation}
We need to show that the Lie bracket $[{\bf v}, {\bf w}]$ belongs to the distribution. We have
\begin{equation} \label{eq8}
B^I_{JK}[{\bf v}, {\bf w}]^K = B^I_{JK}\left(v^Lw^K_{,L}-w^Lv^K_{,L}\right) =-B^I_{JK,L}(v^L w^K-w^Lv^K),
\end{equation}
where Equation (\ref{eq7}) has been invoked repeatedly. We note that, for any frame $\bf P$, the partial derivatives of the connection symbols can be written as
\begin{equation} \label{eq9}
\Gamma^I_{JK,L} = P^I_\alpha P^{-\alpha}_{\;\;\;J,KL}-\Gamma^M_{JK} \Gamma^I_{ML}.
\end{equation}
Denoting, for clarity, the connections $\Gamma_1$ and $\Gamma_2$ in Equation (\ref{eq1}) as $\Gamma$ and $\hat \Gamma$, respectively, and using Equations (\ref{eq8}) and (\ref{eq9}), we can write
\begin{equation} \label{eq10}
\begin{split}
B^I_{JK}[{\bf v}, {\bf w}]^K & = -B^I_{JK,L}(v^L w^K-w^Lv^K) \\
& = \left(-\Gamma^M_{JK} \Gamma^I_{ML}+{\hat\Gamma}^M_{JK} {\hat\Gamma}^I_{ML}\right)(v^L w^K-w^Lv^K),
\end{split}
\end{equation}
in which the symmetry of the second partial derivatives in (\ref{eq9}) has been exploited to cancel some terms. By definition, ${\bf B}=\Gamma-{\hat \Gamma}$, which yields
\begin{equation} \label{eq11}
-\Gamma^M_{JK} \Gamma^I_{ML}+{\hat\Gamma}^M_{JK} {\hat\Gamma}^I_{ML}=-B^M_{JK} B^I_{ML} - B^M_{JK} {\hat \Gamma}^I_{ML}-{\hat\Gamma}^M_{JK} B^I_{ML}.
\end{equation}
Plugging this result into Equation (\ref{eq10}) and invoking (\ref{eq7}), completes the proof.
\end{proof}

\section{The groupoid perspective} \label{sec:groupoid}
\subsection{Preliminaries}

In continuum mechanics, a {\it material symmetry} is a symmetry of a constitutive law under the action of a group. In particular applications, this broad definition is further restricted to the appropriate context. For a {\it simple elastic material}, the constitutive equation of a body $\mathcal B$ in a given reference configuration has the form of a function
\begin{equation}
{\bf T}={\bf T}({\bf F}, X).
\end{equation}
In this equation, $\bf T$ stands for the Cauchy stress tensor, and $\bf F$ is the deformation gradient, both tensors being evaluated at the point $X \in {\mathcal B}$. A material symmetry at $X$ is a symmetry of this function under the right action of the special linear group $SL(3;{\mathbb R})$. More specifically, a symmetry in this case is represented by a matrix $\bf G$ of unit determinant, such that
\begin{equation}
{\bf T}({\bf F G}, X)={\bf T}({\bf F}, X)
\end{equation}   
is satisfied identically for all ${\bf F} \in GL(3;{\mathbb R})$, that is, for all non-singular $3 \times 3$ matrices $\bf F$. The {\it symmetry group} ${\mathcal G}_X$ at $X \in {\mathcal B}$ is the collection of all these symmetries under the operation of matrix multiplication.

The symmetries just described may be regarded as {\it local symmetries}, in the sense that they are associated with a particular body point. There is, however, a different sense in which the word symmetry may be understood. Indeed, if two different points $X, Y \in {\mathcal B}$ happen to be made of the same material, we have a case of a {\it distant symmetry} with an obvious physical meaning. As described in Section \ref{sec:measures}, we may think of it as a {\it transplant operation}, by means of which we mentally cut a small neighbourhood of $X$ and transplant it, after a suitable deformation ${\bf P}_{XY}$, as a substitute of the original neighbourhood of $Y$. If this operation results in a mechanically indistiguishable response at $Y$, we can assert that the two points are made of the same material. This kind of symmetry results in the equation
\begin{equation}
{\bf T}({\bf F},Y)={\bf T}({\bf FP}_{XY},X)
\end{equation}
to be satisfied identically for all non-singular $\bf F$. Under this light, a local symmetry is nothing but a particular case of a distant symmetry when $Y=X$. A distant material symmetry is also known as a {\it material isomorphism} between $X$ and $Y$, a terminology introduced by Walter Noll \cite{noll}.

Do all distant symmetries of a body $\mathcal B$ form a group? In a group, by definition, any two elements can be composed, giving rise to a new element of the group. It is clear that not all distant symmetries can be composed. Indeed, if ${\bf P}_{XY}$ is a material isomorphism between $X$ and $Y$, and ${\bf P}_{WZ}$ is a material isomorphism between $W$ and $Z$, then the composition ${\bf P}_{WZ}{\bf P}_{XY}$ makes no sense unless $Y=W$. This impasse can be solved with the use of the concept of {\it groupoid}, a mathematical structure that generalizes the notion of group in several respects.\footnote{The notion of groupoid was introduced by Brandt in 1927 \cite{brandt}. An excellent motivational presentation is Weinstein's article \cite{weinstein}.}  Intuitively, a groupoid can be regarded as a collection of arrows between points of a base set $\mathcal B$. The arrows are composable if, and only if, they are connected in a tip-to-tail fashion. Moreover, for every arrow $u$ between two points $X$ and $Y$, there is a well-defined inverse arrow $u^{-1}$ from $Y$ to $X$. The compositions $uu^{-1}$ and $u^{-1}u$ give rise to loop-shaped arrows at $Y$ and $X$, known as the {\it units} at the respective points. Endowing the body with the groupoid structure generated by the material isomorphisms, one speaks of the {\it material groupoid}.

Consider next the case of a composite  material, a solid mixture, or a material with several independent constitutive structures. One obvious way to deal with these cases is to identify the shared symmetries between the components, and to obtain what can be regarded as the intersection of the individual material groupoids. A more general point of view, however, consists in attributing an official status, as it were, to some clever combination of the various structures as a point of departure for the analysis. Thus, in the case of two components, can we combine the corresponding two material groupoids into a single superstructure that encompasses both? 

The positive answer to this question can be obtained by placing the physical implications of shared symmetries within the formal framework of the so-called {\it double groupoids}.\footnote{Double categories and double groupoids were introduced in 1963 by Charles Ehresmann \cite{ehresmann}.} The basic building block of a double groupoid can be thought of as the cooperative result of two pairs of material isomorphisms that have joined forces to achieve a common goal, as vaguely suggested in Figure \ref{fig:hands}.

 \begin{figure}[H]
\begin{center}
\begin{tikzpicture} [scale=1.0]
\draw  (2.1,0.35)--(0.2,0.3)--(0.2,0.8) -- (2,0.8) to[out=60,in=205] (2.5,1.1)--(2.5,1)--(3.2,1) to [out=0, in=0] (3.2,0.8) -- (2.85,0.8) -- (3.2,0.8) to [out=0, in=0] (3.2,0.6)--(2.8,0.6)--(3.15,0.6) to [out=0, in=0] (3.15,0.4)--(2.78,0.4)--(3.08,0.4) to [out=0, in=0] (3.08,0.25)--(2.6,0.25) to [out=180,in=-45] (2.1,0.35) ;


\begin{scope} [rotate around={270:(1.80,1.5)},yshift=0]
\draw[fill=gray!50]  (2.1,0.35)--(0.2,0.3)--(0.2,0.8) -- (2,0.8) to[out=60,in=205] (2.5,1.1)--(2.5,1)--(3.2,1) to [out=0, in=0] (3.2,0.8) -- (2.85,0.8) -- (3.2,0.8) to [out=0, in=0] (3.2,0.6)--(2.8,0.6)--(3.15,0.6) to [out=0, in=0] (3.15,0.4)--(2.78,0.4)--(3.08,0.4) to [out=0, in=0] (3.08,0.25)--(2.6,0.25) to [out=180,in=-45] (2.1,0.35) ;
\end{scope}
\begin{scope} [rotate around={180:(1.80,1.5)},yshift=0]
\draw[fill=white]  (2.1,0.35)--(0.2,0.3)--(0.2,0.8) -- (2,0.8) to[out=60,in=205] (2.5,1.1)--(2.5,1)--(3.2,1) to [out=0, in=0] (3.2,0.8) -- (2.85,0.8) -- (3.2,0.8) to [out=0, in=0] (3.2,0.6)--(2.8,0.6)--(3.15,0.6) to [out=0, in=0] (3.15,0.4)--(2.78,0.4)--(3.08,0.4) to [out=0, in=0] (3.08,0.25)--(2.6,0.25) to [out=180,in=-45] (2.1,0.35) ;
\end{scope}
\begin{scope} [rotate around={90:(1.80,1.5)},yshift=0]
\draw[fill=gray!50]  (2.1,0.35)--(0.2,0.3)--(0.2,0.8) -- (2,0.8) to[out=60,in=205] (2.5,1.1)--(2.5,1)--(3.2,1) to [out=0, in=0] (3.2,0.8) -- (2.85,0.8) -- (3.2,0.8) to [out=0, in=0] (3.2,0.6)--(2.8,0.6)--(3.15,0.6) to [out=0, in=0] (3.15,0.4)--(2.78,0.4)--(3.08,0.4) to [out=0, in=0] (3.08,0.25)--(2.6,0.25) to [out=180,in=-45] (2.1,0.35) ;
\end{scope}
\draw[fill=white]  (2,0.8) to[out=60,in=205] (2.5,1.1)--(2.5,1)--(3.2,1) to [out=0, in=0] (3.2,0.8) -- (2.85,0.8) -- (3.2,0.8) to [out=0, in=0] (3.2,0.6)--(2.8,0.6)--(3.15,0.6) to [out=0, in=0] (3.15,0.4)--(2.78,0.4)--(3.08,0.4) to [out=0, in=0] (3.08,0.25)--(2.6,0.25) to [out=180,in=-45] (2.1,0.35) ;
\end{tikzpicture}
\end{center}
\caption{Artistic rendition of the building block of a double groupoid}
\label{fig:hands}
\end{figure}
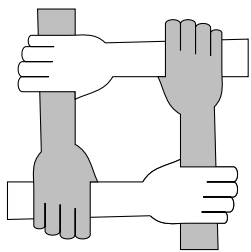

\subsection{Groups, groupoids, double groupoids}

Recall that a group is a set $\mathcal G$ endowed with a binary internal operation (called {\it product}, {\it multiplication}, or {\it composition}). This operation, indicated by apposition of two elements, is associative, namely, $(ab)c=a(bc)$, for all $a,b,c \in {\mathcal G}$. Moreover, there is a unique {\it unit} element $e \in {\mathcal G}$ such that $eg=ge=g$, for all $g \in {\mathcal G}$. Finally, for each element $g \in {\mathcal G}$, there is a unique {\it inverse} element $g^{-1}$, such that $g^{-1}g=gg^{-1}=e$. The uniqueness of the unit and the inverses does not need to be assumed ab initio, since it can be proved from the definitions.

In a groupoid, a binary associative operation is defined in a set $\mathcal Z$, just as in the case of a group, except that the operation is not generally defined for all pairs of elements. Associativity means that if the products $uv$ and $vw$ are defined, then the products $(uv)w$ and $u(vw)$ are defined and are equal. Moreover, just as in a group, for every $u \in {\mathcal Z}$ there is a unique inverse $u^{-1} \in {\mathcal Z}$ such that both products $u^{-1}u$ and $uu^{-1}$ are defined. At this point, however, a crucial difference with the concept of group makes its appearance, as these two products function as units, but are not necessarily equal. More precisely, if the product $uv$ is defined, then $uvv^{-1}=u$ and $u^{-1}uv=v$.

This compact and elegant definition can be replaced with an equivalent one that, although longer, is more amenable to applications.\footnote{The link between the two definitions is provided by the selection of a distinguished subset of $\mathcal Z$ consisting of all elements expressible in the form $uu^{-1}$ for some $u \in {\mathcal Z}$. This subset is then identified with the set of objects, which is the base set of the groupoid.} According to this definition, a groupoid ${\mathcal Z} \rightrightarrows {\mathcal B}$ consists of two sets, namely, a {\it total set} $\mathcal Z$ (or {\it set of arrows}) and a {\it base set} $\mathcal B$ (or {\it set of objects}). These two sets are linked by two surjective projection maps, $\alpha:{\mathcal Z} \to {\mathcal B}$ and $\beta:{\mathcal Z} \to {\mathcal B}$, called the {\it source map} and the {\it target map}, respectively. The definition, so far, conjures up the picture of a collection of arrows whose tails and tips are objects. The maps $\alpha$ and $\beta$ assign to each arrow its tail (or source) and its tip (or target), respectively. Not every pair of points is necessarily connected by an arrow. But if an arrow $u \in {\mathcal Z}$ has source $X=\alpha(u)$ and target $Y=\beta(u)$, then there is also an {\it inverse} arrow $u^{-1} \in {\mathcal Z}$ such that $Y=\alpha(u^{-1})$ and $X=\beta(u^{-1})$. A binary associative operation (product, or composition) $uv$ is defined in $\mathcal Z$ between two arrows $u, v \in {\mathcal Z}$ if, and only if, $\alpha(u)=\beta(v)$, that is, when the tip of $v$ coincides with the tail of $u$, as shown in Figure \ref{fig:inverse}.  The arrow $uv$ has the tail of $v$ and the tip of $u$. Just as in the composition of functions, the second factor is applied first. Finally, to each object $X \in {\mathcal B}$ a {\it unit} arrow $\epsilon(X) \in {\mathcal Z}$ is assigned such that $u^{-1}u=\epsilon(\alpha(u))$ and $uu^{-1}=\epsilon(\beta(u))$, and such that $u=u\epsilon(\alpha(u))=\epsilon(\beta(u))u$, for all $u \in {\mathcal Z}$. A groupoid is {\it transitive} if for each pair of objects, $X$ and $Y$, there is at least one arrow $u$ such that $\alpha(u)=X$ and $\beta(u)=Y$.

 \begin{figure}[H]
\begin{center}
\begin{tikzpicture} [scale=0.6]
\tikzset{->-/.style={decoration={
  markings,
  mark=at position .65 with {\arrow{stealth'}}},postaction={decorate}}}
  
  \draw[thick, o-o,->-] (3,0) -- (0,0); 
  \draw [thick,o-, ->-] (6,0) -- (3,0);
  \draw[thick,->-](5.85,-0.05) to [bend left] (0.15,-0.05);
  \node[left] at (0,0) {$Z$};
  \node [above] at (3,0) {$Y$};
  \node[right] at (6,0) {$X$};
  \node at (1.5,.7) {$u$};
    \node at (4.5,.7) {$v$};
      \node at (3,-1.3) {$uv$};
\node at (3,-2.5) {(a) Composition};
\begin{scope}[xshift=12cm]
  \draw[thick, ->-] (0,0) to [bend left] (6,0); 

  \draw[thick,->-](6,0) to [bend left] (0,0);
  \node[above] at (0.25,0.1) {$X$};

  \node[above] at (6,0.1) {$Y$};
  \node at (3,1.3) {$u$};

      \node at (3,-1.3) {$u^{-1}$};

      \draw[thick, ->-] (0,0) arc (360:5:0.5);
          \draw[thick, ->-] (6,0) arc (180:-175:0.5);
          \node[below] at (-1.45,-0.15) {$\epsilon(X)$};
            \node[below] at (7.4,-0.15) {$\epsilon(Y)$};
\node at (3,-2.5) {(b) Inverse and units};
\end{scope}
\end{tikzpicture}
\end{center}
\caption{Groupoid operations}
\label{fig:inverse}
\end{figure}
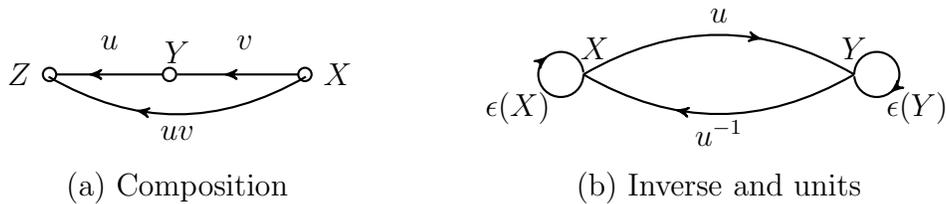

Starting from the general definition of a groupoid, it is not difficult to check that the collection of all loop-shaped arrows at a point $X$, namely the collection of all arrows $u$ such that $\alpha(u)=\beta(u)=X$, constitutes a group, known as the {\it vertex group at $X$}. Moreover, one can easily prove that in a transitive groupoid all the vertex groups are mutually conjugate. In a material groupoid, where the arrows are identified with material isomorphisms, transitivity is tantamount to uniformity. Furthermore, the vertex groups are precisely the symmetry groups of the material at each point.

In the definition of a groupoid, the nature of the sets $\mathcal Z$ and $\mathcal B$ is left unspecified. If, for example, these sets happen to be smooth manifolds, and if the maps involved have certain desirable properties, we obtain the category of Lie groupoids. As an example, a smoothly uniform material body gives rise to a Lie groupoid. Leaving these smoothness considerations aside, it may happen that a groupoid has a base set that is itself a groupoid, whose base is another set.\footnote{This situation is not uncommon in other mathematical structures. A good example is the iterated tangent bundle of a differentiable manifold.} This idea would lead to the concept of 2-groupoid. A double groupoid is a somewhat more involved concept, whose definition will be presently reviewed.

\begin{example}\label{ex:pair} {\rm{\bf The pair groupoid}: For any set of objects $\mathcal B$, the Cartesian product ${\mathcal B}^2={\mathcal B} \times {\mathcal B}$ has a natural groupoid structure, called {\it the pair groupoid over $\mathcal B$}. Each arrow can be represented as an ordered pair $u=(Y,X)$, such that $\alpha(u)=Y$ and $\beta(u)=X$.\footnote{The only reason to adopt the second projection as the source, and the first projection as the target, is one of notational elegance. If $v=(X,Y)$ and $u=(Z,Y)$, then $uv=(Z,Y)(Y,X)=(Z,X)$.} The unit at $X \in {\mathcal B}$ is $\epsilon(X)=(X,X)$, and the inverse of $u=(X,Y)$ is $u^{-1}=(Y,X)$.The pair groupoid is transitive and all its vertex groups are trivial. If $\mathcal B$ is a smooth manifold, the pair groupoid is a Lie groupoid.}
\end{example}

A {\it double groupoid} consists of a set $\mathcal Q$ endowed with two groupoid structures, ${\mathcal Q} \rightrightarrows {\mathcal V}$ and ${\mathcal Q} \rightrightarrows {\mathcal H}$, called, respectively, the {\it horizontal} and the {\it vertical} structures. Each of the base spaces is itself a groupoid over a common base $\mathcal B$, namely ${\mathcal H}\rightrightarrows {\mathcal B}$ and ${\mathcal V}\rightrightarrows {\mathcal B}$, called, respectively, the horizontal and vertical {\it side groupoids}. A common notation for a double groupoid is
\begin{equation} \label{dg1}
\begin{matrix}
{\mathcal Q}                   &\rightrightarrows  & {\mathcal V} \\
  \downdownarrows    &                      &  \downdownarrows  \\
  {\mathcal H}                 & \rightrightarrows   &\mathcal{B}
\end{matrix}
\end{equation}
Since there are 4 different groupoids involved in the definition, it is necessary to establish a notation for the elements, maps, and operations of each. Authors differ in this respect, each author adopting a notation that seems more convenient to particular applications. The notations adopted herein are in part a mixture of the notations in \cite{ehresmann}, \cite{brown1}, and \cite{natale}. 

The elements of $\mathcal B$ will be denoted with capital letters $(X, Y, ...)$. The arrows, units, and projections of the side groupoid ${\mathcal H}\rightrightarrows {\mathcal B}$  will follow the usual notation for groupoids introduced above. For the vertical side groupoid  ${\mathcal V}\rightrightarrows {\mathcal B}$, the same notation will be used, except that a hat (circumflex accent) will be added. The products will be indicated by simple apposition, since there is no room for confusion. Thus, for example, $uv$ is an arrow in $\mathcal H$, and ${\hat p} {\hat q}$ is an arrow in $\mathcal V$. The projections and units will be also distinguished by the presence or absence of a hat, to avoid any possible ambiguity. Thus, $\alpha(u)$ is the source of $u \in {\mathcal H}$ and ${\hat \alpha}({\hat p})$ is the source of ${\hat p} \in {\mathcal V}$ .

The elements of $\mathcal Q$ will be denoted with double-strike letters $({\mathbb A},{\mathbb B}, ...)$, and represented graphically as squares (much in the same way as the elements of $\mathcal H$ and $\mathcal V$ are represented as arrows). A square has two horizontal parallel sides belonging to $\mathcal H$, while the vertical sides belong to $\mathcal V$. Moreover, there are two different products between squares, one for each of the two groupoid structures  ${\mathcal Q}\rightrightarrows {\mathcal V}$ and ${\mathcal Q}\rightrightarrows {\mathcal H}$, denoted respectively by $\hor$ and $\ver$.

The projections and the units in each of the two groupoid structures of $\mathcal Q$ will be signaled with a superimposed tilde and with a subscript $_H$ or $_V$ for the horizontal and vertical structures, respectively. Note that the source and target of the horizontal structure belong to $\mathcal V$, while their couterparts for the vertical structure belong to $\mathcal H$. To exemplify this notation, we draw an element ${\mathbb A} \in {\mathcal Q}$  in the shape of a square, as anticipated, namely,
\begin{equation} \label{dg2}
\begin{tikzpicture}[baseline=(current  bounding  box.center)]
  \draw[thick,-stealth'] (3,0) -- (0,0);
  \draw[thick,-stealth'] (3,3) -- (0,3);
  \draw[thick,-stealth'] (3,0) -- (3,3);
  \draw[thick,-stealth'] (0,0) -- (0,3);
\node at (1.5,1.5) {$\mathbb A$};
\node at (4.2,1.5) {${\hat s}={\tilde \alpha}_H({\mathbb A})$};
\node at (-1.3,1.5) {${\hat t}={\tilde \beta}_H({\mathbb A})$};
\node[above] at (1.5,3) {$t={\tilde \beta}_V({\mathbb A})$};
\node[below] at (1.5,0) {$s={\tilde \alpha}_V({\mathbb A})$};
\end{tikzpicture}
\end{equation}

In the horizontal structure, the product ${\mathbb A}\hor{\mathbb A}'$ can only be carried out if ${\hat s}={\hat t}'$, in agreement with the usual concatenation criterion. Graphically,
\begin{equation} \label{dg3}
\begin{tikzpicture}[baseline=(current  bounding  box.center)]
\foreach \x in {0,1.5,4.5}
{  \draw[thick,-stealth'] (1.5+\x,0) -- (0+\x,0);
  \draw[thick,-stealth'] (1.5+\x,1.5) -- (0+\x,1.5);
  \draw[thick,-stealth'] (1.5+\x,0) -- (1.5+\x,1.5);
  \draw[thick,-stealth'] (0+\x,0) -- (0+\x,1.5);}
\node[left] at(-0.,0.75) {${\mathbb A}\hor {\mathbb A}'\;=\;{\hat t}$};
\node[above] at(0.75,1.5) {$  t$};
\node[above] at(2.25,1.5) {$ t'$};
\node[below] at(0.75,-0.1) {$  s$};
\node[below] at(2.25,0) {$ s'$};
\node[right] at (3,0.75) {${\hat s}'\;=\;{\hat t}$};
\node[above] at (5.25,1.5) {$tt'$};
\node[below] at (5.25,0) {$ss'$};
\node[right] at (6,0.75) {${\hat s}'$};
\end{tikzpicture}
\end{equation}
The unit square of the horizontal structure at the vertical arrow $\hat u$ is the square
\begin{equation} \label{dg3a}
\begin{tikzpicture}[baseline=(current  bounding  box.center)]
\foreach \x in {0}
{  \draw[thick,-stealth'] (1.5+\x,0) -- (0+\x,0);
  \draw[thick,-stealth'] (1.5+\x,1.5) -- (0+\x,1.5);
  \draw[thick,-stealth'] (1.5+\x,0) -- (1.5+\x,1.5);
  \draw[thick,-stealth'] (0+\x,0) -- (0+\x,1.5);}
\node[left] at(-0.,0.75) {${\tilde\epsilon}_H({\hat u})\;=\;{\hat u}$};
\node[above] at(0.75,1.5) {$ \epsilon({\hat \beta}({\hat u}))$};
\node[below] at(0.75,-0.1) {$ \epsilon({\hat \alpha}({\hat u}))$};
\node[right] at (1.5,0.75) {${\hat u}$};
\end{tikzpicture}
\end{equation}

Similarly, in the vertical structure, the product ${\mathbb A}\ver{\mathbb A}'$ can only be carried out if $s'=t'$, or graphically
\begin{equation} \label{dg4}
\begin{tikzpicture}[baseline=(current  bounding  box.center)]
\foreach \x in {0,-1.5}
{  \draw[thick,-stealth'] (1.5,0+\x) -- (0,0+\x);
  \draw[thick,-stealth'] (1.5,1.5+\x) -- (0,1.5+\x);
  \draw[thick,-stealth'] (1.5,0+\x) -- (1.5,1.5+\x);
  \draw[thick,-stealth'] (0,0+\x) -- (0,1.5+\x);}
\node[left] at(-0.,0) {${\mathbb A}\ver {\mathbb A}'\;=\;\;\;\;\;$};
\node[left] at (0,0.75) {${\hat t}$};
\node[above] at(0.75,1.5) {$  t$};
\node[right] at (1.5,0.75) {${\hat s}$};
\node[right] at (1.5,-0.75) {${\hat s}'$};
\node[below] at(0.75,-1.5) {$s'$};
\node[left] at (0,-0.75) {$\hat t'$};
\node[right] at (1.8,0.1) {$\;\;=\;\;\;{\hat t} {\hat t}'$};
\foreach \x in {3.5}
{\draw[thick,-stealth'] (1.5+\x,-0.75) -- (0+\x,0-0.75);
  \draw[thick,-stealth'] (1.5+\x,1.5-0.75) -- (0+\x,1.5-0.75);
  \draw[thick,-stealth'] (1.5+\x,0-0.75) -- (1.5+\x,1.5-0.75);
  \draw[thick,-stealth'] (0+\x,0-0.75) -- (0+\x,1.5-0.75);}
\node[right] at (5,0.1) {${\hat s}{\hat s}'$};
\node[above] at (4.25,0.75) {$t$};
\node[below] at (4.25,-0.75) {$s'$};
\end{tikzpicture}
\end{equation}
The unit square of the vertical structure at the horizontal arrow $ u$ is the square
\begin{equation} \label{dg4a}
\begin{tikzpicture}[baseline=(current  bounding  box.center)]
\foreach \x in {0}
{  \draw[thick,-stealth'] (1.5+\x,0) -- (0+\x,0);
  \draw[thick,-stealth'] (1.5+\x,1.5) -- (0+\x,1.5);
  \draw[thick,-stealth'] (1.5+\x,0) -- (1.5+\x,1.5);
  \draw[thick,-stealth'] (0+\x,0) -- (0+\x,1.5);}
\node[left] at(-0.,0.75) {${\tilde\epsilon}_V({u})\;=\;{ {\hat\epsilon}({ \beta}({ u}))}$};
\node[above] at(0.75,1.5) {$ u$};
\node[below] at(0.75,-0.1) {$ u$};
\node[right] at (1.5,0.75) {${ {\hat\epsilon}({ \alpha}({ u}))}$};
\end{tikzpicture}
\end{equation}

The definition of a double groupoid must be completed by the imposition of a compatibility condition between the two structures, namely,
\begin{equation} \label{dg5}
({\mathbb A}\hor{\mathbb B})\ver({\mathbb C}\hor{\mathbb D})=({\mathbb A}\ver{\mathbb C})\hor({\mathbb B}\ver{\mathbb D}),
\end{equation}
whenever the operations are possible. Graphically, this condition means that, in a square made of four squares whose edges in contact match, the result is the same whether one first composes horizontally and then vertically, or vice versa. In other words, the large square shown below makes sense.
\begin{equation} \label{dg6}
\begin{tikzpicture}[baseline=(current  bounding  box.center)]
\foreach \x in {0,1.5}
{\foreach \y in {0,1.5}
{  \draw[thick,-stealth'] (1.5+\x,0+\y) -- (0+\x,0+\y);
  \draw[thick,-stealth'] (1.5+\x,1.5+\y) -- (0+\x,1.5+\y);
  \draw[thick,-stealth'] (1.5+\x,0+\y) -- (1.5+\x,1.5+\y);
  \draw[thick,-stealth'] (0+\x,0+\y) -- (0+\x,1.5+\y);}}
\node at (0.75,0.75) {$\mathbb C$};
\node at (0.75,2.25) {$\mathbb A$};
\node at (2.25,0.75) {$\mathbb D$};
\node at (2.25,2.25) {$\mathbb B$};
\end{tikzpicture}
\end{equation}

Finally, the {\it double source map} assigning to each square $\mathbb A$ its sources in the horizontal and vertical structures, ${\tilde \alpha}_H({\mathbb A}), {\tilde \alpha}_V({\mathbb A})$, is often assumed to be surjective. The graphical representation of this assumption is that a right angle made of a horizontal and a vertical arrow issuing from a common corner in $\mathcal B$, can always be completed (not necessarily uniquely) to a square in $\mathcal Q$, as suggested below. This is known as the {\it filling condition}.
\begin{equation} \label{dg7}
\begin{tikzpicture}[baseline=(current  bounding  box.center)]
\draw[thick,-stealth'] (0,0)--(-2,0);
\draw[thick,-stealth'] (0,0)--(0,2);
\draw[thick,-stealth'] (4,0)--(2,0);
\draw[thick,-stealth'] (4,0)--(4,2);
\draw[thick,-stealth'] (4,2)--(2,2);
\draw[thick,-stealth'] (2,0)--(2,2);
\node at (1,1) {$\Longrightarrow$};
\end{tikzpicture}
\end{equation}

A useful example of a double groupoid is the {\it coarse double groupoid}, $\square({\mathcal H},{\mathcal V})$, generated by two groupoids, $\mathcal H$ and $\mathcal V$, with a common base set. It consists of all the possible consistent squares that can be formed using the two groupoids as sides. Any double groupoid can be naturally mapped as an inclusion into the coarse double groupoid generated by its side groupoids. A {\it double Lie groupoid} is obtained when the four groupoids involved in the algebraic definition of a double groupoid are Lie groupoids, with the additional condition that the double source map is a surjective submersion.

Given a double groupoid $\mathcal Q$, we define the {\it transposed double groupoid} ${\mathcal Q}^T$ by exchanging the side groupoids. Accordingly, each square ${\mathbb A}\in\mathcal Q$ gives rise to a correspoding transposed square ${\mathbb A}^T \in {\mathcal Q}^T$, as shown in Figure \ref{fig:transpose}. Notice the exchange of the position of the corners $X$ and $Y$.
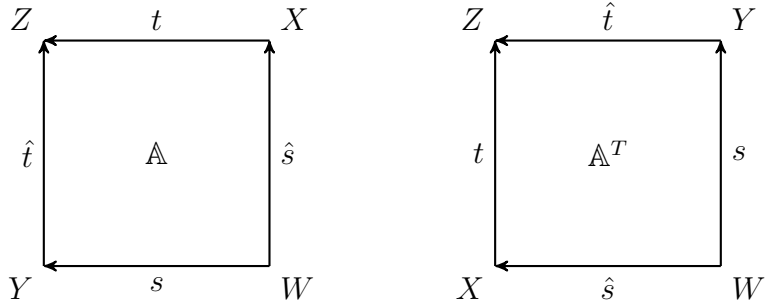
\begin{figure} [H]
\begin{center}
\begin{tikzpicture}[baseline=(current  bounding  box.center)]
  \draw[thick,-stealth'] (3,0) -- (0,0);
  \draw[thick,-stealth'] (3,3) -- (0,3);
  \draw[thick,-stealth'] (3,0) -- (3,3);
  \draw[thick,-stealth'] (0,0) -- (0,3);
\node at (1.5,1.5) {$\mathbb A$};
\node[right] at (3,1.5) {${\hat s}$};
\node[left] at (0,1.5) {${\hat t}$};
\node[above] at (1.5,3) {$t$};
\node[below] at (1.5,0) {$s$};
\node[below right] at ((3,0) {$W$};
\node[above right] at ((3,3) {$X$};
\node[above left] at ((0,3) {$Z$};
\node[below left] at ((0,0) {$Y$};
\begin{scope} [xshift=6cm]
  \draw[thick,-stealth'] (3,0) -- (0,0);
  \draw[thick,-stealth'] (3,3) -- (0,3);
  \draw[thick,-stealth'] (3,0) -- (3,3);
  \draw[thick,-stealth'] (0,0) -- (0,3);
\node at (1.5,1.5) {${\mathbb A}^T$};
\node[right] at (3,1.5) {${s}$};
\node[left] at (0,1.5) {${t}$};
\node[above] at (1.5,3) {$\hat t$};
\node[below] at (1.5,0) {$\hat s$};
\node[below right] at ((3,0) {$W$};
\node[above right] at ((3,3) {$Y$};
\node[above left] at ((0,3) {$Z$};
\node[below left] at ((0,0) {$X$};
\end{scope}
\end{tikzpicture}
\end{center}
\caption{A square and its transposed}
\label{fig:transpose}
\end{figure}

The {\it core} of a double groupoid consists of the collection of all squares for which $W=X=Y$ and in which the arrows $s$ and ${\hat s}$ are, respectively, the unit arrows of the horizontal and vertical groupoids at $W$. The core can be given a canonical groupoid structure by regarding its arrows as pairs of arrows $(t,{\hat t})$ with common source $W$ and target point $Z$. In many respects, the core groupoid characterizes the double groupoid \cite{brown1}, although the physical identity of the original side groupoids is not preserved when passing to the core.

\begin{example}\label{ex:pairdouble} {\rm{\bf The pair double groupoid}: The Cartesian product ${\mathcal B}^4={\mathcal B}\times{\mathcal B}\times {\mathcal B} \times{\mathcal B}$ has a double groupoid natural structure, which can be identified with the coarse groupoid $\square({\mathcal B}^2, {\mathcal B}^2)$. The two side groupoids are equal to the pair groupoid introduced in Example \ref{ex:pair}. In reference to the diagram on the left of Figure \ref{fig:transpose}, we can write: ${\tilde \alpha}_V({\mathbb A})=(Y,W)$, ${\tilde \beta}_V({\mathbb A})=(Z,X)$, ${\tilde \alpha}_H({\mathbb A})=(X,W)$, and ${\tilde \beta}_H({\mathbb A})=(Z,Y)$}
\end{example}

\subsection{The material double groupoid of a composite}

Each of the two components of a binary composite gives rise to a well-defined material groupoid, ${\mathcal Z}\rightrightarrows {\mathcal B}$ and ${\hat {\mathcal Z}}\rightrightarrows {\mathcal B}$. The arrows of each of these groupoids represent material isomorphisms, which implies that they are both subgroupoids of $\Pi^1(\mathcal B, \mathcal B)$, the 1-jets groupoid of $\mathcal B$, whose arrows are the linear isomorphisms between the tangent spaces, $T_X{\mathcal B}$ and $T_Y{\mathcal B}$, of all pairs of points $X, Y \in {\mathcal B}$. The {\it material double groupoid} of a binary composite has been first defined in \cite{symmetry} as the double groupoid $\mathcal Q$ whose side groupoids are $\mathcal Z$ and ${\hat{\mathcal Z}}$, and whose squares satisfy the {\it commutation condition}
\begin{equation} \label{double1}
t{\hat s} = {\hat t}s,
\end{equation}
using the notation of Equation (\ref{dg2}).

The commutation condition (\ref{double1}) makes sense precisely because both side groupoids of $\mathcal Q$ are subgroupoids of the same groupoid $\Pi^1(\mathcal B, \mathcal B)$. Nevertheless, even when both side groupoids are transitive, there is no reason to expect that any non-trivial commutative squares exist. A legitimate question, therefore, is the investigation of the physical meaning of the commutative squares in the double groupoid of a binary composite.

\subsection{A measure of misalignment}

For the sake of clarity, let us consider a case in which both components are uniform (i.e., both side groupoids are transitive) and triclinic (i.e., their structure groups are trivial). As a consequence of these assumptions, given any pair of points $X, Y \in {\mathcal B}$, there is always a unique arrow ($u$ and $u^*$, say) of each side groupoid with source $X$ and target $Y$, as shown in Figure \ref{fig:misalign1}. The {\it misalignment} of $Y$ with respect to $X$ is defined as the composition $m={u^*}^{-1} u$. The particular case $m=\epsilon(X)$, corresponds to the two points {\it in the composite} being materially isomorphic. Otherwise, there is an actual deviation from isomorphism. This definition can be regarded as a discrete version of the tensor $\bf B$ defined in Equation (\ref{eq1}).
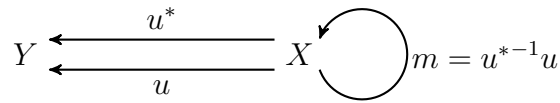
\begin{figure}[H]
\begin{center}
\begin{tikzpicture}[baseline=(current  bounding  box.center)]
  \draw[thick,-stealth'] (3,0) -- (0,0);
  \draw[thick,-stealth'] (3,0.4) -- (0,0.4);
\node[below] at (1.5,0) {$u$};
\node[above] at (1.5,0.4) {$u^*$};
\draw[thick,-stealth'] (3.6,-0.) arc(-160:160:0.6);
\node[right] at (3,0.2) {$X$};
\node[left] at (0,0.2) {$Y$};
\node at (5.8,0.2) {$m={u^*}^{-1} u$};
\end{tikzpicture}
\end{center}
\caption{Misalignment}
\label{fig:misalign1}
\end{figure}

While $m$ is certainly an arrow in $\Pi(\mathcal B, \mathcal B)$, it fails in general to be an arrow in either of the side groupoids. Under a change of reference configuration, the misalignment arrow is affected only by the local  deformation at $X$. Indeed, let $\phi: {\mathcal B} \to {\mathbb R}^3$ be the change of configuration, then the misalignment $\phi(m)$ of $\phi(Y)$ with respect to $\phi(X)$ is given by the composition
\begin{equation} \label{double2}
\phi(m)= H\circ m \circ  H^{-1},
\end{equation}
where $H=j^1_X \phi$, that is, the derivative of $\phi$ at $X$.

Given two pairs of points, $X,Y$ and $X',Y'$, with misalignments $m$ and $m'$ respectively, a fair comparison cannot be invariantly defined due to the aforementioned dependence on the local reference configuration. Let us, however, effect a change of configuration $\phi_1$ that brings both {\it first components} of the composite to the same local rereference at points $X$ and $X'$. If the misalignments $\phi_1(m)$ and $\phi_1(m')$ coincide (that is, if they differ by the effect of a mere Euclidean translation), we say that the pairs $X, Y$ and $X', Y'$ are {\it 1-compatible}. Physically speaking, these two pairs have the same degree of non-uniformity. This property can be regarded as a special kind of {\it coupled symmetry} between pairs of points.

An equivalent, more concise, definition of 1-compatibility between the pairs $X,Y$ and $X', Y'$ is obtained from the right-hand side of Equation (\ref{double2}) by interpreting $H$ as the material isomomorphism between the first component at $X$ and $X'$. In other words, the pairs are 1-compatible if, and only if, $m'= H\circ m \circ  H^{-1}$.

In a similar fashion, we can define the notion of {\it 2-compatibility}, by changing the configuration so that it is the second component, rather than the first, that is brought by a configuration change $\phi_2$ to the same local state at $X$ and $X'$. Clearly, if the composite points $X$ and $X'$ happen to be materially isomorphic, both definitions coincide. If $s$ and $s^*$ are material isomorphisms between the first and second components, respectively, of $X$ and $X'$ , and if the pairs $X,Y$ and $X',Y'$ are 1-compatible, then a necessary and sufficient condition for  these pairs $X, Y$ and $X', Y'$ to be also 2-compatible is that $n={ s^*}^{-1} s$ must belong to the normalizer of $m$ in the general linear group.\footnote{For a single element $m$ of a group, the normalizer coincides with the centralizer, namely, the subgroup consisting of all elements $n$ that commute with $m$.} Notice that $n$ is precisely the misalignment of $X'$ with respect to $X$.

The uniformity of a binary composite requires more than the simultaneous 1- and 2- compatibility of all the commuting squares in the coarse groupoid. We may say that the composite is uniform if its core groupoid is transitive, as shown in \cite{symmetry}. For the case of two triclinic components this statement is self evident, since the core groupoid of a material double groupoid consists of all pairs of identical arrows, while all other commutative squares are eliminated from the picture. We note also that if the core is not transitive, the only information that we can gather from it is that the composite is not uniform, but we cannot quantify a local measure of non-uniformity. Clearly, as demonstrated in Section \ref{sec:measures}, these analyses can be carried out without recourse to or recognition of the underlying structure of a double groupoid, just as is the case with many theories when it comes to their practical applications.

\subsection{The commutation condition}

We have defined the material double groupoid $\mathcal Q$ as the collection of all commutative squares in the coarse double groupoid $\square(\mathcal Z,{\hat{ \mathcal Z}})$. We will presently interpret the commutation condition in terms of the coupled symmetries just introduced. A square in the double groupoid, such as the one represented on the left in Figure \ref{fig:complementary}, consists of two horizontal and two vertical arrows. Since we have assumed the two side groupoids to be transitive (that is, the two components are materially uniform), each such square ${\mathbb A} \in {\mathcal Q}$ induces a {\it complementary} square ${\mathbb A}^*$ in $\square(\hat{\mathcal Z}, {\mathcal Z})=\square^T({\mathcal Z}, \hat{\mathcal Z})$, the transposed of the coarse groupoid. This square, represented on the right side of Figure \ref{fig:complementary}, is unique, since we have also assumed the two components to be triclinic. We have denoted the arrows in the complementary square with an asterisk. Thus, for instance $t^*$ represents a material isomorphism in the second component of the composite, while ${\hat t}^*$ represents a material isomorphism in the first.
\begin{figure}[H]
\begin{center}
\begin{tikzpicture}[baseline=(current  bounding  box.center)]
  \draw[thick,-stealth'] (3,0) -- (0,0);
  \draw[thick,-stealth'] (3,3) -- (0,3);
  \draw[thick,-stealth'] (3,0) -- (3,3);
  \draw[thick,-stealth'] (0,0) -- (0,3);
\node at (1.5,1.5) {$\mathbb A$};
\node[right] at (3,1.5) {${\hat s}$};
\node[left] at (0,1.5) {${\hat t}$};
\node[above] at (1.5,3) {$t$};
\node[below] at (1.5,0) {$s$};
\node[below right] at ((3,0) {$W$};
\node[above right] at ((3,3) {$X$};
\node[above left] at ((0,3) {$Z$};
\node[below left] at ((0,0) {$Y$};
\begin{scope} [xshift=6cm]
  \draw[thick,-stealth'] (3,0) -- (0,0);
  \draw[thick,-stealth'] (3,3) -- (0,3);
  \draw[thick,-stealth'] (3,0) -- (3,3);
  \draw[thick,-stealth'] (0,0) -- (0,3);
\node at (1.5,1.5) {${\mathbb A}^*$};
\node[right] at (3,1.5) {${\hat s}^*$};
\node[left] at (0,1.5) {${\hat t}^*$};
\node[above] at (1.5,3) {$t^*$};
\node[below] at (1.5,0) {$s^*$};
\node[below right] at ((3,0) {$W$};
\node[above right] at ((3,3) {$X$};
\node[above left] at ((0,3) {$Z$};
\node[below left] at ((0,0) {$Y$};
\end{scope}
\end{tikzpicture}
\end{center}
\caption{A square in $\mathcal Q$ and its complementary square in $\square(\hat{\mathcal Z}, {\mathcal Z})$ }
\label{fig:complementary}
\end{figure}
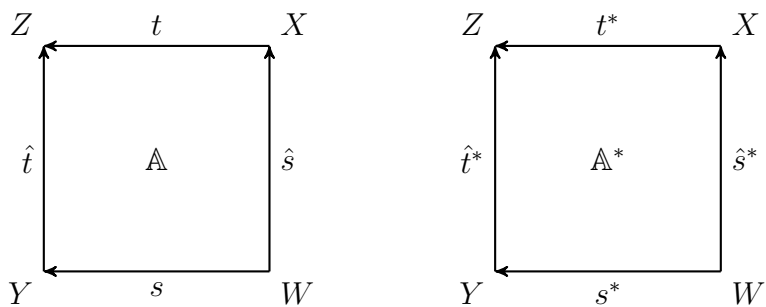

The complementary square ${\mathbb A}^*$ is not necessarily commutative and, hence, it does not necessarily belong to the transposed material groupoid ${\mathcal Q}^T$. We only have the following equations
\begin{enumerate}
\item\label{cond1} $t {\hat s} ={\hat t} s$, by the commutativity of squares in $\mathcal Q$,
\item\label{cond2} $t {\hat s}^*={\hat t}^* s$, by uniformity of the triclinic component 1,
\item\label{cond3} $t^* {\hat s}={\hat t}  s^*$, by uniformity of the triclinic component 2.
\end{enumerate}

Consider the pairs $W,Y$ and $X,Z$.  Their respective misalignments are $m=s^{*^{-1}} s$ and $m'=t^{*^{-1}} t$. Moreover, $\hat s$ is a material isomorphism of component 2 of the composite. Therefore,
\begin{equation}
{\hat s} m {\hat s}^{-1}={\hat s} s^{*^{-1}} s  {\hat s}^{-1} ={\hat s} s^{*^{-1}} {\hat t}^{-1} t={\hat s} {\hat s}^{-1} t^{*^{-1}} t=m',
\end{equation}
where Conditions \ref{cond1} and \ref{cond3} above have been used. We conclude, therefore, that these two pairs are 2-compatible. Following the same procedure, one can prove that pairs $W,X$ and $Y,Z$ are 1-compatible. In conclusion, the meaning of the commutation condition (which selects which squares enter in the material double groupoid $\mathcal Q$) is the 1- and 2-compatibility of the respective pairs of corners in each square.

The following proposition is a direct result of the above definitions and properties.

\begin{prop} The collection of complemetary squares of $\mathcal Q$ coincides with  ${\mathcal Q}^T$ if, and only if, the opposite pairs of all squares of $\mathcal Q$ are simultaneously 1- and 2-compatible.
\end{prop}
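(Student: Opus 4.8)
The plan is to reduce the statement to one computational equivalence between a square of $\mathcal Q$ and its complementary square, and then to chase inclusions. Fix throughout a square $\mathbb A\in\mathcal Q$ with corners $W,X,Y,Z$ and edges $s,t,\hat s,\hat t$ as on the left of Figure \ref{fig:complementary}, and let $\mathbb A^{*}$ be its complementary square, with edges $s^{*},t^{*},\hat s^{*},\hat t^{*}$. The first remark is that, because both components are triclinic and uniform, $\mathbb A^{*}$ is simply the unique square of $\square(\hat{\mathcal Z},\mathcal Z)$ having the same four corners $W,X,Y,Z$ as $\mathbb A$; hence $\mathbb A\mapsto\mathbb A^{*}$ is an involution exchanging $\square(\mathcal Z,\hat{\mathcal Z})$ and $\square(\hat{\mathcal Z},\mathcal Z)$, the two pairs of opposite corner-pairs of $\mathbb A$ and of $\mathbb A^{*}$ are the very same, and $\mathcal Q^{T}$ is exactly the set of commutative squares of $\square(\hat{\mathcal Z},\mathcal Z)$.

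The key lemma I would establish is: for $\mathbb A\in\mathcal Q$, the complementary square $\mathbb A^{*}$ lies in $\mathcal Q^{T}$, i.e.\ satisfies its own commutation condition $t^{*}\hat s^{*}=\hat t^{*}s^{*}$, if and only if the opposite pairs of corners of $\mathbb A$ are simultaneously 1- and 2-compatible. This is proved by repeating the cancellation displayed just before the proposition. Using identity \ref{cond2}, namely $t\hat s^{*}=\hat t^{*}s$ (valid because component 1 is triclinic and uniform), to substitute $s(\hat s^{*})^{-1}=(\hat t^{*})^{-1}t$, one rewrites $\hat s^{*}(s^{*})^{-1}s(\hat s^{*})^{-1}$ as $(t^{*})^{-1}t$ precisely when $t^{*}\hat s^{*}=\hat t^{*}s^{*}$; since $m=(s^{*})^{-1}s$ and $m'=(t^{*})^{-1}t$ are the misalignments of $(W,Y)$ and $(X,Z)$ and $\hat s^{*}$ is the component-1 isomorphism relating those two corners, this says exactly that $(W,Y)$ and $(X,Z)$ are 1-compatible; reading the same identity through \ref{cond3} instead yields the 2-compatibility of $(W,X)$ and $(Y,Z)$. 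Because the discussion preceding the proposition already shows that $\mathbb A\in\mathcal Q$ by itself makes $(W,Y),(X,Z)$ 2-compatible and $(W,X),(Y,Z)$ 1-compatible, the additional content needed for all four pairs to be simultaneously 1- and 2-compatible is exactly one of these equivalent conditions, and the lemma follows. The only delicate point in this step is the bookkeeping of hats, stars and inverses; the algebra itself is a three-line manipulation of identities \ref{cond1}--\ref{cond3}.

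Given the lemma, both directions are immediate. ``Only if'': if the complementary squares of $\mathcal Q$ form exactly $\mathcal Q^{T}$, then $\mathbb A^{*}\in\mathcal Q^{T}$ for every $\mathbb A\in\mathcal Q$, so by the lemma the opposite pairs of every square of $\mathcal Q$ are simultaneously 1- and 2-compatible. ``If'': assume this compatibility for all squares of $\mathcal Q$. The lemma gives $\{\mathbb A^{*}:\mathbb A\in\mathcal Q\}\subseteq\mathcal Q^{T}$ at once. For the reverse inclusion, take $\mathbb B\in\mathcal Q^{T}$ and write $\mathbb B=\mathbb A_{0}^{T}$ with $\mathbb A_{0}\in\mathcal Q$; since transposition preserves which two corners of a square lie on a common edge, it preserves the two pairs of opposite corner-pairs, so the opposite pairs of $\mathbb B$ are simultaneously 1- and 2-compatible as well. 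The whole construction, and the property ``simultaneously 1- and 2-compatible'', is symmetric under interchange of the two components (this swap exchanges $\mathcal Q$ with $\mathcal Q^{T}$ and swaps 1- with 2-compatibility), so applying the lemma in that form to $\mathbb B$ shows that its complement $\mathbb B^{*}$ lies in $\mathcal Q$; as complementation is an involution, $\mathbb B=(\mathbb B^{*})^{*}$ exhibits $\mathbb B$ as the complementary square of the element $\mathbb B^{*}$ of $\mathcal Q$. Hence $\mathcal Q^{T}\subseteq\{\mathbb A^{*}:\mathbb A\in\mathcal Q\}$ and the two sets coincide.

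I expect the main obstacle to be this reverse inclusion in the ``if'' direction: the ``only if'' direction and the inclusion $\{\mathbb A^{*}:\mathbb A\in\mathcal Q\}\subseteq\mathcal Q^{T}$ fall straight out of the key lemma, but showing that every element of $\mathcal Q^{T}$ is realized as a complementary square requires noticing that complementation is an involution (so $\mathbb B$ is automatically the complement of $\mathbb B^{*}$) and then transporting the compatibility hypothesis from $\mathcal Q$ to $\mathcal Q^{T}$ via the transpose, which in turn relies on the symmetry of the entire setup under exchanging the two components. Anyone wary of invoking that symmetry can instead simply rerun the three-line computation of the key lemma verbatim for squares of $\square(\hat{\mathcal Z},\mathcal Z)$, which yields the needed implication $\mathbb B\in\mathcal Q^{T}\Rightarrow\mathbb B^{*}\in\mathcal Q$ directly.
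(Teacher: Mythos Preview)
Your proof is correct. The paper does not in fact supply a proof of this proposition; it simply asserts that it is ``a direct result of the above definitions and properties,'' so there is no paper argument to compare against. Your key lemma --- that for $\mathbb A\in\mathcal Q$ the commutativity of $\mathbb A^{*}$ is equivalent to the two compatibility conditions \emph{not already} guaranteed by membership in $\mathcal Q$ --- is exactly the observation that the computation immediately preceding the proposition makes plausible, and your derivation of it by the same substitution pattern using identities \ref{cond1}--\ref{cond3} is the natural way to make it explicit. The one place you go genuinely beyond what the paper gestures at is the reverse inclusion $\mathcal Q^{T}\subseteq\{\mathbb A^{*}:\mathbb A\in\mathcal Q\}$: your use of the involutive nature of complementation together with the component-swap symmetry (or, equivalently, rerunning the three-line computation with the roles of the two side groupoids exchanged) is sound and is precisely what is needed to justify the word ``coincides'' in the statement.
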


\subsection{Example}             

The commutation condition is clearly invariant under a chage of reference configuration. To better grasp its meaning, therefore, it is possible always to change the configuration in such a way that all 4 points are in exactly the same state as far as the first component of the mixture is concerned. In that case, the arrows $s,t, {\hat s}^*$ and ${\hat t}^*$ in Figure \ref{fig:complementary} are represented by the unit matrix $[I]$. In other words, the corresponding material isomorphisms are Euclidean translations.\footnote{Notice that an arrow represented by the unit matrix is not necessarily a unit (loop-shaped) arrow, since the two end points of the arrow may be different.} Enforcing the commutation condition, we obtain the general result that, given a square $\mathbb A$ in the material double groupoid ${\mathcal Q}$ of the composite, a change of configuration can be produced such that $\mathbb A$ attains the matrix representation of its 4 arrows given by

\begin{equation}
\begin{tikzpicture}[baseline=(current  bounding  box.center)]
  \draw[thick,-stealth'] (3,0) -- (0,0);
  \draw[thick,-stealth'] (3,3) -- (0,3);
  \draw[thick,-stealth'] (3,0) -- (3,3);
  \draw[thick,-stealth'] (0,0) -- (0,3);
\node at (1.5,1.5) {$\mathbb A$};
\node[right] at (3,1.5) {$[\hat S]$};
\node[left] at (0,1.5) {$[\hat S]$};
\node[above] at (1.5,3) {$[I]$};
\node[below] at (1.5,0) {$[I]$};
\node[below right] at ((3,0) {$W$};
\node[above right] at ((3,3) {$X$};
\node[above left] at ((0,3) {$Z$};
\node[below left] at ((0,0) {$Y$};

\end{tikzpicture}
\end{equation}
This diagram reveals the intuitive meaning of 1-compatibility between the pairs $W,X$ and $Y,Z$. The complementary square ${\mathbb A}^* \in \square(\hat{\mathcal Z}, {\mathcal Z})$ has the form
\begin{equation}
\begin{tikzpicture}[baseline=(current  bounding  box.center)]
  \draw[thick,-stealth'] (3,0) -- (0,0);
  \draw[thick,-stealth'] (3,3) -- (0,3);
  \draw[thick,-stealth'] (3,0) -- (3,3);
  \draw[thick,-stealth'] (0,0) -- (0,3);
\node at (1.5,1.5) {${\mathbb A}^*$};
\node[right] at (3,1.5) {$[I]$};
\node[left] at (0,1.5) {$[I]$};
\node[above] at (1.5,3) {$[P^*]$};
\node[below] at (1.5,0) {$[Q^*]$};
\node[below right] at ((3,0) {$W$};
\node[above right] at ((3,3) {$X$};
\node[above left] at ((0,3) {$Z$};
\node[below left] at ((0,0) {$Y$};

\end{tikzpicture}
\end{equation} 
Since the points $W$ and $Z$ are materially isomorphic as points of the second component, it follows that
\begin{equation}\label{eq200}
[P^*][{\hat S}] = [{\hat S}] [Q^*].
\end{equation}
Thus, the complementary square is not necessarily commutative. It is so if, and only if, $[P^*]$ commutes with $[{\hat S}]$, which implies that $[P^*]=[Q^*]$. This in turn implies that the 1-compatible pairs $W,X$ and $Y,Z$ would also be 2-compatible.

As a concrete example, consider 4 points lying on the Cartesian $x,y$ plane, such that component 1 is in the same state at all 4 points, while component 2 is at states that differ by rotations about the $z$-axis. Since these rotations are commutative, condition (\ref{eq200}) implies that the pairs are both 1- and 2-compatible. In contradistinction, if the rotations are about arbitrary axes, this property is lost in general. Even in the case where the opposite pairs are both 1- and 2- compatible, no two corners of the square are necessarily materially isomorphic as far as the composite is concerned. Consider, for example, the case of rotations of component 2 about $z$ of 10$^o$ between $W$ and $X$, and of 30$^o$ between $W$ and $Y$. By commutativity of the square, the rotation between $Y$ and $Z$ must be of 10$^o$, and consequently of 30$^o$ between $X$ and $Z$. Thus the rotations of component 2 at $X$, $Y$, and $Z$ are, respectively, $10$, $30$, and $40$ degrees, namely, all different. It is the amounts of misalignment of opposite pairs which are equal.

\section{A pictorial glimpse of the double Lie algebroid and the characteristic distribution}
\label{sec:algebroid}

The Lie algebra of a Lie group can be intuitively regarded as the description of the behaviour of the group in a small vicinity of the unit element. The geometric entity underlying the Lie algebra is the tangent space to the group manifold at the group unit. In the case of a Lie groupoid $\mathcal Z\rightrightarrows {\mathcal B}$, a similar construction leads to the notion of the associated Lie algebroid. At each point of $X \in \mathcal B$ there is a well-defined unit $\epsilon (X) \in {\mathcal Z}$, that is, a loop-shaped arrow whose source and target are $X$ and whose composition on the left with any arrow with target $X$ leaves this arrow invariant.

The collection of all arrows arriving at all $X \in {\mathcal B}$ can be conceived as having the structure of a fiber bundle over $\mathcal B$ with projection map $\beta$, the target map of the original groupoid. The unit at $X$ clearly belongs to the fiber of this bundle over $X$. The geometrical entity underlying the Lie algebroid of a Lie groupoid is the vector bundle obtained as the union of the tangent spaces to these fibers at the corresponding units. Given a vector in the tangent space at $\epsilon (X)$, we can project it on the tangent space $T_X{\mathcal B}$ using the derivative of the source map, $\alpha:{\mathcal Z} \to {\mathcal B}$, of the original groupoid.

Following \cite{mms}, these rather abstract notions can be visualized as depicted in Figure \ref{fig:algebroid}. The identity element at $X$ is represented as the solid loop-shaped arrow. A small neighbourhood of this arrow can be explored in any combination of two modalities. In the first, we can move to other nearby loop-shaped arrows, namely, to other elements of the vertex group at $X$. In this way, we obtain the Lie algebra of this vertex group of local symmetries. The second modality consists of exploring the arrows that start at a point in the vicinity of $X$ and end at $X$. Since the opening of each arrow is small, this gap can be regarded as a non-zero vector tangent to $\mathcal B$ at $X$, while the loop-shaped arrows of the first modality project onto the zero vector.
\begin{figure}[H]
\begin{center}
\begin{tikzpicture} [scale=1.5]

\begin{scope}[shift={(2.5,-1.2)}, scale=2]
\draw[fill=gray!40] (0,0) to [out=20, in=160] (2,0) to [out=80, in=225] (2.5,1)  to [out=160, in=20] (0.5,1) to [out=225, in=80] (0,0) ;

\node at (1.9,0.2) {$\mathcal B$};
\end{scope}

\begin{scope}[shift={(5,0)}]
\draw[fill](0,0) circle[radius=0.05cm];
\node[below]  at (0,0) {$_X$};
\draw[fill](0.4,0) circle[radius=0.05cm];
\node[below]  at (0.80,0) {$_{X+dX}$};
\draw [thick, -stealth'](0.1,0) to [out=50, in=0] (0,2) to [out=180, in=130] (-0.1,0);
\draw [thick, dashed, -stealth'](0.1,0) to [out=55, in=0] (0,1.7) to [out=180, in=125] (-0.1,0);
\draw [thick, dashed, -stealth'](0.1,0) to [out=60, in=0] (0,1.5) to [out=180, in=120] (-0.1,0);
\draw [thick, dashed,  -stealth'](0.4,0) to [out=45, in=0] (0,2.5) to [out=180, in=135] (-0.1,0);


\end{scope}

\end{tikzpicture}
\end{center}
\caption{Lie algebroid elements}
\label{fig:algebroid}
\end{figure}
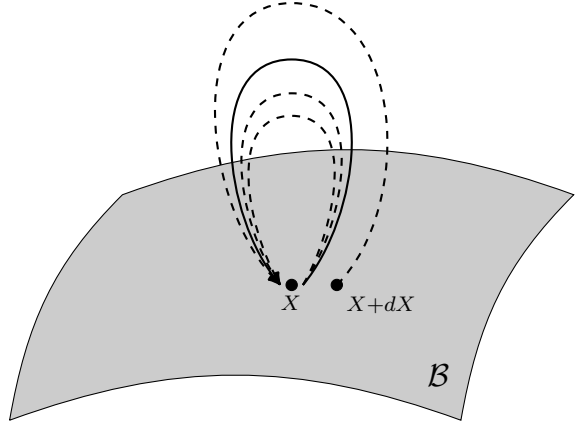

In the spirit of regarding the Lie algebroid as the collection of arrows in a small neighbourhood of each identity $\epsilon(X)$, we conclude that, if the projection of all these arrows covers the whole tangent space $T_X{\mathcal B}$, the situation is such that all points in the neighbourhood are connected by at least one arrow. If this is the case at all points $X$, therefore, we can say that we are in the presence of a transitive Lie algebroid. In a general case, however, the image of the projection into each tangent space may be a linear subspace of $T_X{\mathcal B}$, and the dimension of this subspace may vary from point to point. In this case, one obtains a {\it singular distribution}. If this distribution is integrable, the base manifold can be partitioned into transitive submanifolds of various dimensions, as discussed in \cite{mms1, jgp1, book}.

\begin{example} \label{ex:algebroid} {\rm{\bf{The pair Lie algebroid}:} The underlying space of the Lie algebroid of the pair groupoid of a manifold $\mathcal B$ is the tangent bundle $T{\mathcal B}$. Naturally, the Lie algebroid has also a well-defined bracket operation between sections of this bundle (that is, vector fields), but we are only looking at the underlying geometric object.}
\end{example}

The transition from a groupoid to a double groupoid and their corresponding Lie algebroids is technically more subtle. The underlying geometric object of a double Lie algebroid is a double vector bundle equipped with two Lie algebroid structures, as demonstrated in two classical articles by Mackenzie \cite{mac1, mac2}. Drastically simplifying the treatment, the procedure is carried out in two steps. The first step consists of constructing the Lie algebroid associated with, say, the vertical structure of the double groupoid $\mathcal Q$. The set of objects for this vertical structure is the horizontal side groupoid $\mathcal H$. Each ``point'' of this set is, therefore, a horizontal arrow $u$ and we want to explore, so to speak, a neighbourhood of the unit thereat. But the unit of the vertical structure at $u$ consists of the two horizontal sides $u$ and two loop-shaped vertical arrows. namely, the unit arrows ${\hat \epsilon}(\alpha(u))$ and ${\hat \epsilon}(\beta(u))$, as shown in the diagram (\ref{dg4a}).

As before, we can explore the neighbourhood of this unit by a combination of two modalities, namely the nearby loop-shaped arrows and the slightly open arrows near the units. To capture the essence of this construction, let us represent the second modality only, as done in Figure \ref{fig:firststep}, having in mind an example such as the pair double groupoid. We observe that the tails of the arrows arriving at $X$ and at $Z$ run over the respective tangent spaces $T_X{\mathcal B}$ and $T_Z{\mathcal B}$, thus spanning all nearby horizontal arrows $u+du$ with tail at $X+dX$ and tip at $Z+dZ$. Repeating this process for every arrow $u$ of the horizontal structure, we may say that what we have obtained is a groupoid (of horizontal arrows) whose set of objects is the vector bundle $T{\mathcal B}$.
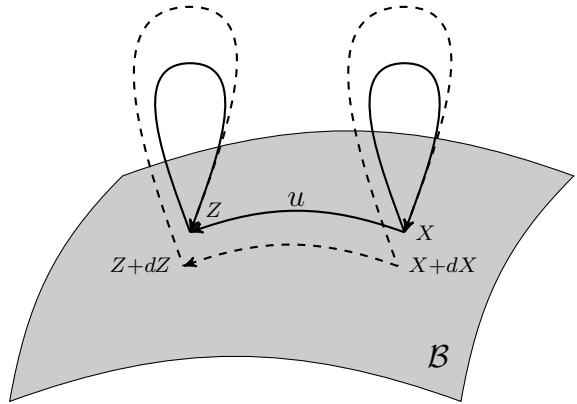
\begin{figure}[H]
\begin{center}
\begin{tikzpicture} [scale=3]

\draw[fill=gray!40] (0,0) to [out=20, in=160] (2,0) to [out=80, in=225] (2.5,1)  to [out=160, in=20] (0.5,1) to [out=225, in=80] (0,0) ;
\draw[thick, -stealth'] (1.75,0.75) to [bend right=20] (0.8,0.75);
\draw [thick, stealth'-](1.75,0.75) to [out=70, in=0] (1.75,1.5) to [out=180, in=110] (1.750,0.750);
\draw [thick, dashed, stealth'-](1.75,0.75) to [out=70, in=0] (1.75,1.75) to [out=180, in=110] (1.720,0.60);
\draw [thick, stealth'-](0.8,0.75) to [out=70, in=0] (0.8,1.5) to [out=180, in=110] (0.8,0.750);
\draw [thick, dashed, stealth'-](0.8,0.75) to [out=70, in=0] (0.8,1.75) to [out=180, in=110] (0.770,0.60);
\node[right] at (1.75,0.75) {$_X$};
\node[right] at (1.72,0.6) {$_{X+ dX}$};
\node[above right] at (0.82,0.77) {$_Z$};
\node[left] at (0.77,0.6) {$_{Z+dZ}$};

\node[above] at (1.27,0.82){$u$};

\node at (1.9,0.2) {$\mathcal B$};
\draw[thick,dashed, -stealth'] (1.72,0.6) to [bend right=20] (0.77,0.6);

\end{tikzpicture}
\end{center}
\caption{The first step in the construction of the double Lie algebroid delivers a groupoid over the tangent bundle $T{\mathcal B}$}
\label{fig:firststep}
\end{figure}

This new groupoid with base $T{\mathcal B}$ can be used to start the second step of the construction of the double Lie algebroid. Indeed, we can attempt to construct the Lie algebroid of this new groupoid. What that would entail is bringing $X$ towards $Z$ thus making $u$ equal to the horizontal unit at $Z$ and exploring the neighbourhood while the tail of the arriving arrows moves on the tangent space $T_Z{\mathcal B}$. We may say, without the due degree of precision, that at each point $Z$ of $\mathcal B$ we are considering the {\it double identity} of $\mathcal Q$, which is a square whose four corners coincide with $W$ and whose arrows are all identities, two from the horizontal and two from the vertical groupoid. In Figure \ref{fig:doublealgebroid} we represent this square as a clover leaf. A neighbourhood of this element lives in the iterated tangent bundle $T(T{\mathcal B})$.

The construction sketched above could have been reversed, in the sense that one could have first concentrated the attention on the horizontal (rather than the vertical) structure of $\mathcal Q$, thus obtaining a different groupoid with $T{\mathcal B}$ as its set of objects. Can perhaps the two final results for the double Lie algebroid be different? The resolution of this apparent contradiction is given by the fact that the iterated tangent bundle possesses two different bundle structures, which turn out to be equivalent via the so-called canonical involution (or switch operation).

Even after the intuitive presentation just valiantly attempted, it would appear that the technicalities are so overwhelming that no specific results pertaining to the application at hand could be obtained. Neverhteless, some admittedly non-rigorous but quite convincing results can be glimpsed. From the general picture elicited for the construction of the double Lie algebroid of the material double groupoid $\mathcal Q$, we have learned that, in the final analysis, we need to investigate all the commutative squares that may exist in the vicinity of the double units at each point of $\mathcal B$.  In Figure \ref{fig:doublealgebroid}, we have suggestively represented the double unit at a point $W$ as a clover leaf, for obvious reasons. In the same figure, we have drawn a nearby generic square. The result of this investigation should at the very least confirm and justify the measures of non-uniformity proposed in Section \ref{sec:measures}.

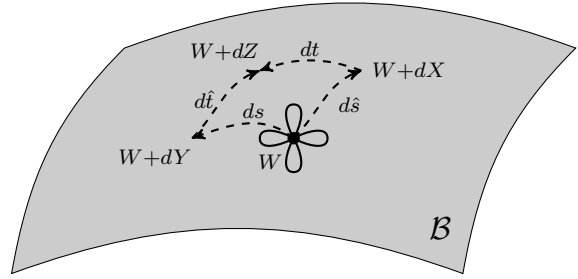
\begin{figure}[H]
\begin{center}
\begin{tikzpicture} [scale=1.5]

\begin{scope}[shift={(2.5,-1.2)}, scale=2]
\draw[fill=gray!40] (0,0) to [out=20, in=160] (2,0) to [out=80, in=225] (2.5,1)  to [out=160, in=20] (0.5,1) to [out=225, in=80] (0,0) ;

\node at (1.9,0.2) {$\mathcal B$};
\end{scope}

\begin{scope}[shift={(5,0)}]
\draw[fill](0,0) circle[radius=0.05cm];
\node[below]  at (-0.20,-0.05) {$_W$};

\draw[thick] (0,0) to [out=120, in=180] (0,0.3) to [out=0, in=60] (0,0) to [out=30, in=90] (0.3,0) to [out=-90, in  = -30] (0,0) to [out=-60, in=0] (0,-0.3) to [out=180, in=-120] (0,0) to [out=-150, in=-90] (-0.3,0) to [out=90, in=150] (0,0);
\draw[thick, dashed, -stealth'] (0,0) to [out=45, in=200] (0.6,0.6);
\draw[thick, dashed, -stealth'] (0.6,0.6) to [out=160, in=20] (-0.3,0.6);
\draw[thick, dashed, -stealth'] (-0.9,0) to [out=45, in=200] (-0.3,0.6);
\draw[thick, dashed, -stealth'] (0,0) to [out=140, in=20] (-0.9,0);
\node[right] at (0.6,0.6) {$_{W+dX}$};
\node[above left] at (-0.2,0.6) {$_{W+dZ}$};
\node[below left] at (-0.8,0.0) {$_{W+dY}$};
\node[right] at (0.3,0.3) {$_{d{\hat s}}$};
\node[above] at (0.15,0.63) {$_{dt}$};
\node[left] at (-0.6,0.35) {$_{d{\hat t}}$};
\node[above] at (-0.35,0.08) {$_{ds}$};

\end{scope}

\end{tikzpicture}
\end{center}
\caption{Double Lie algebroid square elements}
\label{fig:doublealgebroid}
\end{figure}

Consider the case of two smoothly uniform components of the triclinic kind (or with any discrete symmetry groups). Their material groupoids are, therefore, transitive Lie groupoids. Since the symmetry groups are trivial, the (smoothly varying) arrows in a neighbourhood are uniquely defined between each pair of points. Let the arrow $H$ (material isomorphism) between two points $X$ and $X'$ of one material be expressed in terms of the implant maps $P(X)$  and $P(X')$ of an archetype $A$, as suggested in Figure \ref{fig:arrow}, namely,
\begin{equation} \label{eq100}
H^I_J=P^I_\alpha(X') P^{-\alpha}_J (X),
\end{equation}
The differential of $H$ under a small change of the coordinates of the end points is
\begin{equation} \label{eq101}
dH^I_J=P^I_\alpha(X') P^{-\alpha}_{J,K}(X) dX^K +  P^I_{\alpha,K}(X') P^{-\alpha}_J dX'^K,
\end{equation}
where commas indicate partial derivatives. Recalling the definition of the Christoffel symbols of the material connection
\begin{equation} \label{eq102}
\Gamma^I_{JK}=-P^{-\alpha}_J P^I_{\alpha,K},
\end{equation}
Equation (\ref{eq101}) can be written as
\begin{equation}  \label{eq103}
dH^I_J=H^I_M \Gamma^M_{JK} (X) dX^K  -  H^M_J \Gamma^I_{MK}(X') dX'^K.
\end{equation}
For the unit arrow, we have that $X'=X$ and $H=I$. Therefore, in a neighbourhood of a unit, we obtain
\begin{equation}  \label{eq103a}
dH^I_J= \Gamma^I_{JK} (X) (dX^K  - dX'^K).
\end{equation}
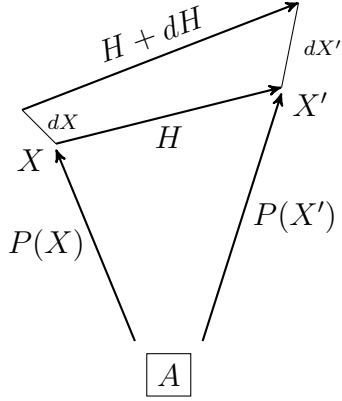
\begin{figure}[H]
\begin{center}
\begin{tikzpicture} [scale=1.5]

\node[draw] (r) at (0,0) {$A$};
\draw[thick,-stealth'] (-0.3,0.3)--(-1,2);
\draw[thick,-stealth'] (0.3,0.3)--(1.,2.5);
\draw[thick,-stealth'] (-1,2.05)--(1,2.55);
\draw (-1,2.05)--(-1.3,2.35);
\draw (1,2.55)--(1.15,3.3);
\draw[thick,-stealth'] (-1.3,2.35)--(1.15,3.3);
\node[left] at (-1,1.9) {$X$};
\node[right] at (1,2.4) {$X'$};
\node[below] at (0,2.3) {$H$};
\node[above, rotate=20] at (-0.08,2.82) {$H+dH$};
\node[right] at (-1.17,2.25) {$_{dX}$};
\node[right] at (1.1,2.9) {$_{dX'}$};
\node[right] at (0.65,1.4) {$P(X')$};
\node[left] at (-0.65,1.15) {$P(X)$};
\end{tikzpicture}
\end{center}
\caption{Small change of an arrow by varying its ends}
\label{fig:arrow}
\end{figure}

This result can be used to implement the commutation condition (\ref{double1}) in a square near the double unit, such as the one shown in Figure \ref{fig:doublealgebroid}. The infinitesimal version of Equation (\ref{double1}) is
\begin{equation}  \label{eq104}
d{\hat s} + dt=ds + d{\hat t}.
\end{equation}
In terms of the matrix expressions of the arrows involved, this equation can be written as
\begin{equation}  \label{eq105}
d{\hat S}^I_J  + dT^I_J  =  dS^I_J + d{\hat T}^I_J.
\end{equation}
Using Equation (\ref{eq103a}) for each of these arrows yields
\begin{equation} \label{eq106}
{\hat\Gamma}^I_{JK}(-dX^K)+\Gamma^I_{JK} (dX^K-dZ^K) = \Gamma^I_{JK} (-dY^K) + {\hat\Gamma}^I_{JK} (dY^K-dZ^K).
\end{equation}
Recognizing the non-uniformity tensor $\bf B$ introduced in Equation (\ref{eq1}), we can write this commutation condition around the double unit as
\begin{equation} \label{eq107}
B^I_{JK} (dX^K+dY^K-dZ^K) = 0.
\end{equation}
If the tensor $\bf B$ vanishes identically, we  arrive at the conclusion that all squares are commutative, which in turn implies that the composite is uniform. At the other extreme, if the tensor $\bf B$ is not annihilated by any vector $v^K$, we obtain that the only surviving commutative squares must satisfy the condition $dZ^K=dX^K+dY^K$. In particular, if points $X$ and $Y$ are made to coincide with $W$, then $dX^K=dY^K=0$, and we conclude that the only commutative square that survives is the double unit itself. In other words, there are no non-trivial squares in the core, such that the arrows of the two side groupoids between $W$ and $Z$ coincide. This is the case of a totally intransitive groupoid. On the other hand, if at each point the annihilator of $\bf B$ is a linear subspace of the tangent space, we obtain a singular distribution on $\mathcal B$. In this case, the body can be regarded as a union of uniform submanifolds of various dimensions.

If the symmetry group of one or both of the components is a Lie group, a similar argument will lead to the appearance of an added element of its Lie algebra. Thus, in the case of an isotropic solid, an arbitrary skew symmetric term appears on the right-hand side of Equation (\ref{eq107})   which results in the emergence of a different measure of non-uniformity, corresponding exactly to the differential version of that given in Equation (\ref{eq3}) involving the difference between the Riemannian metrics of the two materials of the composite.

\section{Summary}

A combination of two uniform material structures coexisting upon the same substrate gives rise in general to a non-uniform mixture. This lack of uniformity may be viewed as a continuous distribution of material defects and, hence, as a possible driving force for material evolution. Accordingly, the definition of a quantitative measure of defectivity is necessary to formulate evolution equations. In this paper, we have proposed such measures for various combinations of symmetry types. These definitions appear to be novel. 

Although theaforementioned aspects of the theory of binary composites can be treated by conventional means, we also attempted to reveal the underlying geometric object behind the theory of uniformity of binary composites. In the spirit of the well-established tradition that advocates the use of differential geometric tools in this field, we have argued that the appropriate framework for binary composites is the theory of double groupoids and their associated double Lie algebroids and characteristic distributions. The proposed measures of non-uniformity are shown to arise naturally within this framework.

\end{document}